\font\smallit=cmti10
\renewcommand\section{\@startsection {section}{1}{\z@}
{-30pt \@plus -1ex \@minus -.2ex}
{2.3ex \@plus.2ex}
{\normalfont\normalsize\bfseries\boldmath}}
\renewcommand\subsection{\@startsection{subsection}{2}{\z@}
{-3.25ex\@plus -1ex \@minus -.2ex}
{1.5ex \@plus .2ex}
{\normalfont\normalsize\bfseries\boldmath}}
\renewcommand{\@seccntformat}[1]{\csname the#1\endcsname. }
\newtheorem{theorem}{Theorem}
\newtheorem{lemma}{Lemma}
\newtheorem{proposition}{Proposition}
\theoremstyle{definition}
\newtheorem{definition}{Definition}
\begin{document}

\begin{center}
\uppercase{\bf A New 0(\lowercase{k} $\mathbf{\lowercase{log x}}$) Algorithm for Josephus Problem}
\vskip 20pt
{\bf Hikaru Manabe}\\
{\smallit Keimei Gakuin Junior and High School, Kobe City, Japan}\\
{\tt urakihebanam@gmail.com}
\vskip 10pt
{\bf Ryohei Miyadera }\\
{\smallit Keimei Gakuin Junior and High School, Kobe City, Japan}\\
{\tt runnerskg@gmail.com}
\vskip 10pt
{\bf Yuji Sasaki}\\
{\smallit Graduate School of Advanced Science and Engineering Hiroshima University, Higashi-Hiroshima City, Japan}\\
{\tt urakihebanam@gmail.com}
\vskip 10pt
{\bf Shoei Takahashi}\\
{\smallit Keio University Faculty of Environment and Information Studies, Fujisawa City, Japan}\\
{\tt shoei.takahashi@keio.jp}
\vskip 10pt
{\bf Yuki  Tokuni}\\
{\smallit Graduate School of Information Science University of Hyogo, Kobe City, Japan}\\
{\tt ad21o040@gsis.u-hyogo.ac.jp}
{\tt \ }


\end{center}

\centerline{\bf Abstract}
\noindent
Let $k,n$ be natural numbers such that $n,k \geq 2$.
We present a $O(k \log n)$ algorithm to determine the last number that remains in 
the Josephus problem, where numbers $1,2,\dots, n$ are arranged in a circle and every $k$-th numbers are removed.
The algorithm is as the followings.\\
$(i)$ We start with $x=0$. \\
$(ii)$ We substitute $x$ with $x +  \left\lfloor \frac{x}{k-1} \right\rfloor +1$ until we get $x$ such 
$nk-n \leq x$. \\
$(iii)$ Then, $nk-x$ is the number that remains.

The time complexity of our algorithm is $O(k \log n)$, and this time complexity is on a par with the existing $O(k \log n)$ algorithm.  We do not have any recursion overhead or stack overflow because we do not use any recursion. Therefore, the space complexity of our algorithm is $O(1)$, and ours is better than the existing $O(k \log n)$ algorithm in this respect.
When $k$ is small and $n$ is large, our algorithm is better than the existing $O(k \log n)$ algorithm. 

\section{Introduction}
Let $\mathbb{Z}_{\ge 0}$ and $\mathbb{N}$ represent sets of non-negative and positive integers, respectively.
Let $n,k \in \mathbb{N}$.
We have a finite sequence of positive integers $1,2,3$
$, \cdots, n-1, n$ arranged in a circle, and we start to remove every $k$-th number until only one remains. The Josephus problem is to determine the number that remains.

Many researchers have tried to find an efficient algorithm to determine the number that remains in the Josephus problem.

D. Knuth \cite{knuth} proposed $O(n \log n)$ algorithm, and E. Lloyd \cite{O(nlogm)} proposed $O(n log k)$ algorithm for the case that $k < n$.
For a small $k$ and a large $n$, a $O(k \log n)$ algorithm is proposed in \cite{josephusalgorithm}.

Our algorithm makes use of maximum Nim that is a topic of combinatorial game theory. 

There will be three different ways to read the present article.

If you are a mathematician who wants to understand our algorithm mathematically, please read our article \cite{integer2024}  that will be published soon. Then,
please read Section \ref{sectioforalgo} of the present article.

If you want to understand the mathematical background of the algorithm without going into the details of proofs, please read the present article from the first section to the last section. We provided proof for Theorem \ref{theofjosemaxcal}, but we omitted proofs for other theorems and lemmas. Without the proof, it won't be easy to understand Theorem \ref{theofjosemaxcal}.

If you just want to know our algorithm without any mathematical background, please read Section \ref{sectioforalgo} after skipping Section \ref{maxnimjose}.

\section{Maximum Nim with  the Rule Function $f(x)=\left\lfloor \frac{x}{k} \right\rfloor$ for a Positive Integer Such That $k \geq 2$ and the Josephus Problem }\label{maxnimjose}
For any real number $x$, the floor of $x$ denoted by  $ \left\lfloor x \right\rfloor$ represents the greatest integer less than or equal to $x$.

The classic game of Nim is played using stone piles. Players can remove any number of stones from any pile during their turn, and the player who removes the last stone is considered the winner. Several variants of the classical Nim game exist. For the Maximum Nim, we place an upper bound $f(n)$ on the number of stones that can be removed in terms of the number $n$ of stones in the pile (see \cite{levinenim}). For articles on Maximum Nim, see \cite{thaij2023b} and  \cite{integer2023}.

 This study explores the relation between a Maximum Nim and the Josephus problem. Let  $ f(x) = \left\lfloor \frac{x}{k} \right\rfloor$, where $k$ is a positive integer such that $k \geq 2$.

We define Maximum Nim with the rule function $f$.
\begin{definition}
 Suppose there is a pile of $n$ stones and two players take turns to remove stones from the pile.
At each turn, the player can remove at least one and at most 
$f(m)$ stones if the number of stones is $m$. The player who cannot make another move loses the game.   We call $f(m)$ the rule function of the game.
\end{definition}

We prove that there is a simple relation between a Maximum Nim with the rule function $f$ and the Josephus problem, in which every $k$-th number is to be removed.
 This is remarkable because the games for Nim and Josephus problems are considered entirely different.

\begin{definition}
 We denote the pile of $m$ stones as $(m)$, which we refer to as the \textit{position} of the game.
\end{definition}

We briefly review some necessary concepts in combinatorial game theory (see \cite{lesson} for more details). We consider impartial games with no draws, and therefore, there are only two outcome classes. \\
$(a)$ A position is called a $\mathcal{P}$-\textit{position}, if it is a winning position for the previous player, as long as he/she plays correctly at every stage.\\
$(b)$ A position is called an $\mathcal{N}$-\textit{position}, if it is a winning position for the next player, as long as he/she plays correctly at every stage.

The Grundy number is one of the most important tools in research on combinatorial game theory, and we define it in the following definition.
\begin{definition}\label{defofmexgrundy}
$(i)$ For any position $(x)$, there exists a set of positions that can be reached in precisely one move in this game, which we denote as \textit{move}$(x)$.\\
$(ii)$ The \textit{minimum excluded value} $(\textit{mex})$ of a set $S$ of non-negative integers is the smallest non-negative integer that is not in S. \\
$(iii)$ Let $(x)$ be a position in the game. The associated \textit{Grundy number} is denoted by $\mathcal{G}(x)$ and is recursively defined as follows:
$\mathcal{G}(x) = \textit{mex}(\{\mathcal{G}(u): (u) \in move(x)\}).$
\end{definition}

For the maximum Nim of $x$ stones with rule function $f(x)$, \\
$\textit{move}(x)$ $= \{x-u:1 \leq u \leq f(x) \text{ and } u \in \mathbb{N} \}$.

We assume that  $\mathcal{G}$ is the Grundy number of the Maximum Nim with rule function $f(x)=\left\lfloor \frac{x}{k} \right\rfloor$. The next result demonstrates the usefulness of the Sprague--Grundy theorem for impartial games.
\begin{theorem}\label{theoremoforgrundyn}
For any position $(x)$,
	$\mathcal{G}(x)=0$ if and only if $(x)$ is the $\mathcal{P}$-position.
\end{theorem}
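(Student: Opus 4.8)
The plan is to prove both directions simultaneously by strong induction on the number $x$ of stones, exploiting the two standard recursive characterizations of outcome classes: a position is a $\mathcal{P}$-position exactly when every move leads to an $\mathcal{N}$-position (vacuously so when there are no moves), and a position is an $\mathcal{N}$-position exactly when at least one move leads to a $\mathcal{P}$-position. The induction is well-founded because every element of $move(x)$ has the form $(x-u)$ with $1 \le u \le f(x)$, so each move strictly decreases the pile size; in particular the game always terminates.

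For the base case, consider $(x)$ with $0 \le x < k$. Here $f(x) = \left\lfloor \frac{x}{k} \right\rfloor = 0$, so $move(x) = \emptyset$, giving $\mathcal{G}(x) = \textit{mex}(\emptyset) = 0$. At the same time the player to move cannot move and therefore loses, so $(x)$ is a $\mathcal{P}$-position. Thus the equivalence holds for all such $x$, settling both directions of the base case at once.

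For the inductive step, fix $x \ge k$ and assume the equivalence for every position with fewer than $x$ stones. For the forward direction, suppose $\mathcal{G}(x) = 0$. By the definition of $\textit{mex}$, the value $0$ does not appear among $\{\mathcal{G}(u) : (u) \in move(x)\}$, so $\mathcal{G}(u) \ne 0$ for every $(u) \in move(x)$; by the induction hypothesis each such $(u)$ is an $\mathcal{N}$-position. Since every move from $(x)$ lands in an $\mathcal{N}$-position, $(x)$ is a $\mathcal{P}$-position. Conversely, suppose $\mathcal{G}(x) \ne 0$. Then $0 \in \{\mathcal{G}(u) : (u) \in move(x)\}$, so there is some $(u) \in move(x)$ with $\mathcal{G}(u) = 0$, which by the induction hypothesis is a $\mathcal{P}$-position; hence $(x)$ admits a move to a $\mathcal{P}$-position and is an $\mathcal{N}$-position. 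Combining the two directions yields $\mathcal{G}(x) = 0$ if and only if $(x)$ is a $\mathcal{P}$-position.

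No step here is a genuine obstacle — the statement is the single-heap instance of the Sprague--Grundy theorem, and the argument is classical. The only point requiring care is the bookkeeping around the $\textit{mex}$ definition in each direction (that $\mathcal{G}(x)=0$ forbids a child with Grundy value $0$, whereas $\mathcal{G}(x)\ne 0$ forces one), together with confirming that the induction is well-founded via the strict decrease of the pile. Once these are in place, the equivalence propagates cleanly from the base positions $0 \le x < k$ to all $x$.
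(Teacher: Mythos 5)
Your proof is correct, but there is nothing in the paper to compare it against: the paper does not prove this theorem at all, stating it and deferring to the textbook \cite{lesson}. Your self-contained strong induction is precisely the classical argument (the single-heap instance of the Sprague--Grundy correspondence) that such references contain. The base case is handled correctly ($0 \le x < k$ gives $f(x)=0$, hence $move(x)=\emptyset$, hence $\mathcal{G}(x)=\textit{mex}(\emptyset)=0$, and the player to move loses), both directions of the inductive step use the $\textit{mex}$ bookkeeping properly ($\mathcal{G}(x)=0$ forbids a child of Grundy value $0$, while $\mathcal{G}(x)\neq 0$ forces one, since $\textit{mex}$ of a set being positive puts $0$ in the set), and well-foundedness follows from the strict decrease of the pile size. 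The one ingredient you take as given is the recursive characterization of outcome classes: that $(x)$ is a $\mathcal{P}$-position iff every move leads to an $\mathcal{N}$-position (vacuously at terminal positions), and an $\mathcal{N}$-position iff some move leads to a $\mathcal{P}$-position. The paper's definitions (a) and (b) instead define these classes via ``winning for the previous/next player under correct play,'' so strictly speaking the equivalence of the two formulations is itself a lemma, proved by the same backward induction; it is completely standard, but a fully self-contained write-up would either prove it or cite it explicitly rather than fold it silently into the argument.
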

See \cite{lesson} for the proof of this theorem.

We define the Josephus problem. For the details of the Josephus problem, see \cite{computerscience}.
\begin{definition}\label{defofjj}
Let $n,k \in \mathbb{N}$.
We have a finite sequence of positive integers $1,2,3$
$, \cdots, n-1, n$ arranged in a circle, and we start to remove every $k$-th number until only one remains.
This is the well-known Josephus problem. For $m$ such that $1 \leq m \leq n$ and $1 \leq i \leq n-1$, we define
\begin{equation}
o_n(m)=i \nonumber
\end{equation}
where $m$ is the $i$-th number to be removed, and
\begin{equation}
o_n(m)=n \nonumber
\end{equation}
where $m$ is the number that remains after removing other $n-1$ numbers.
\end{definition}

The following theorem presents a simple relation between the Maximum Nim and the Josephus problem.
\begin{theorem}\label{grundytheoremjj}
For any $n,m \in \mathbb{N}$ such that $1 \leq m \leq n$, we have 
\begin{equation}
n-o_n(m) = \mathcal{G}(nk-m). \label{theoremforgj}
\end{equation}
\end{theorem}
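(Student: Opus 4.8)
The plan is to prove \eqref{theoremforgj} by induction on $n$, matching the one-step self-reduction of the Josephus process against a corresponding recurrence for the Grundy function of the maximum Nim. First I would recast the statement in terms of a moving window: as $m$ runs through $1,\dots,n$ the argument $nk-m$ runs through the block $W_n=\{n(k-1),\dots,nk-1\}$ of $n$ consecutive integers, while $n-o_n(m)$ runs through $\{0,1,\dots,n-1\}$. Thus the identity is equivalent to the conjunction of two claims: $(a)$ the restriction of $\mathcal{G}$ to $W_n$ is a bijection onto $\{0,\dots,n-1\}$, and $(b)$ the value at $nk-m$ is exactly $n-o_n(m)$. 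The base case $n=1$ is immediate: $m=1$, $o_1(1)=1$, and $\mathcal{G}(k-1)=0$ since $k-1<k$ forces $\left\lfloor (k-1)/k\right\rfloor=0$, hence an empty move set, so $\mathcal{G}(k-1)=\textit{mex}(\emptyset)=0=1-o_1(1)$.

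For the Josephus side I would invoke the classical self-reduction. The first number deleted is $q=((k-1)\bmod n)+1$, so $o_n(q)=1$; after deleting $q$ and relabelling the circle so that counting resumes at the successor of $q$, the surviving configuration is a genuine Josephus instance on $n-1$ numbers. This gives $o_n(m)=1+o_{n-1}(\sigma(m))$ for every $m\ne q$, where $\sigma\colon\{1,\dots,n\}\setminus\{q\}\to\{1,\dots,n-1\}$ is the shift relabelling; when $k\le n$ one simply has $q=k$. Substituting into the target identity, the inductive step reduces to deducing $\mathcal{G}(nk-m)=n-o_n(m)$ from the hypothesis $\mathcal{G}((n-1)k-\sigma(m))=(n-1)-o_{n-1}(\sigma(m))$ on the previous block $W_{n-1}$.

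The heart of the argument is therefore a recurrence for $\mathcal{G}$ linking $W_n$ to $W_{n-1}$, which I would establish directly from Definition~\ref{defofmexgrundy}. For $x=nk-m$ one computes $\left\lfloor x/k\right\rfloor=n-\lceil m/k\rceil$, so $\textit{move}(x)=\{x-1,\dots,x-(n-\lceil m/k\rceil)\}$. The clean anchor is the case $m=k$ (valid when $k\le n$): there $\left\lfloor x/k\right\rfloor=n-1$ and the reachable set is exactly $W_{n-1}$, on which the inductive bijection says $\mathcal{G}$ attains all of $\{0,\dots,n-2\}$; hence $\mathcal{G}((n-1)k)=\textit{mex}\{0,\dots,n-2\}=n-1=n-o_n(k)$. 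Computing the remaining $x\in W_n$ in increasing order, each reachable interval straddles the top of $W_{n-1}$ and the already-settled lower part of $W_n$, and one verifies that the least value omitted among the reached Grundy numbers is precisely $n-o_n(m)$, in agreement with $\sigma$.

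The main obstacle is exactly this last $\textit{mex}$ bookkeeping. Because $W_{n-1}$ and $W_n$ overlap when $n>k$ but are separated by a gap when $n<k$, the reachable interval meets the two blocks differently according to the size of $m$ relative to $k$; and in the regime $n<k$ the first deleted element $q$ is no longer $k$, so the tidy anchor above breaks and the correspondence must be checked against $q=((k-1)\bmod n)+1$. One must confirm uniformly that the overlap removes exactly the Grundy values $\{\,n-o_n(m)+1,\dots,n-1\,\}$ while retaining $\{\,0,\dots,n-o_n(m)-1\,\}$, leaving $n-o_n(m)$ as the mex. Once this window recurrence is aligned with $\sigma$, claims $(a)$ and $(b)$ follow together and the induction closes; the special case $\mathcal{G}=0$ then recovers, via Theorem~\ref{theoremoforgrundyn}, the fact that the Josephus survivor is the unique $\mathcal{P}$-position in $W_n$. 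A shorter alternative would be to import the closed form for $\mathcal{G}$ proved in \cite{integer2024} and verify $(b)$ by direct substitution.
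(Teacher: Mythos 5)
First, a point of context: the paper does not prove Theorem \ref{grundytheoremjj} at all --- it defers to Theorem 2 of \cite{integer2024}, and the machinery it imports from that reference (Lemma \ref{seconddefnb}, Proposition \ref{theoremforgk}) indicates that the proof there is organized around the global description of the level sets of $\mathcal{G}$ as orbits of $h_k$, not around a window-by-window induction. Your plan is therefore a genuinely different route, and its scaffolding is correct: the identification of the window $W_n=\{n(k-1),\dots,nk-1\}$, the base case, the computation $\left\lfloor (nk-m)/k \right\rfloor = n-\lceil m/k\rceil$, and the anchor $\mathcal{G}((n-1)k)=\textit{mex}\{0,\dots,n-2\}=n-1$ when $k\le n$ all check out. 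But the proof has a genuine hole at exactly the step that carries all the weight. Your inductive hypothesis determines $\mathcal{G}$ only on the windows $W_j$, $j<n$, whereas the $\textit{mex}$ computations require $\mathcal{G}$ at positions lying in \emph{no} window: consecutive windows $W_j$ and $W_{j+1}$ are separated by a gap of $k-j-1$ integers whenever $j\le k-2$, and moves out of $W_n$ land in these gaps precisely in the regime $n\le k$. Concretely, for every $k\ge 3$ the induction stalls already at $n=2$: the position $x=2k-2\in W_2$ has $f(x)=1$, so $\textit{move}(x)=\{2k-3\}$, and $2k-3$ belongs to no $W_j$ (for $k=3$: $W_1=\{2\}$, $W_2=\{4,5\}$, $W_3=\{6,7,8\},\dots$, and $3$ is in none of them). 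No instance of the theorem for smaller $n$ supplies $\mathcal{G}(2k-3)$, so $\mathcal{G}(2k-2)$ cannot be evaluated from the hypothesis.

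In the opposite regime the "straddle'' picture is also inaccurate: for $n\ge k+2$ and $k+2\le m\le n$, the interval $\textit{move}(nk-m)$ has lower endpoint $nk-m-n+\lceil m/k\rceil < (n-1)(k-1)$, so it dips below $W_{n-1}$ into $W_{n-2}, W_{n-3},\dots$, where overlapping windows make the same Grundy value occur at several reachable positions; that case could be rescued by strong induction over all $j<n$, but it is not what you wrote. More importantly, the central claim --- that the Grundy values attained on $\textit{move}(nk-m)$ are exactly $\{0,\dots,n-o_n(m)-1\}$ together with values exceeding $n-o_n(m)$ --- is the entire content of the theorem, and in your write-up it is only asserted (``one verifies'', ``one must confirm uniformly''), never established. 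To close the argument you would have to strengthen the induction so that it determines $\mathcal{G}$ on all of $\mathbb{Z}_{\ge 0}$, gap positions included; done cleanly, this amounts to first proving the orbit description in Proposition \ref{theoremforgk}$(iii)$ and then reading the theorem off from it --- which is precisely the route taken in \cite{integer2024}, and presumably the reason the present paper delegates the proof there rather than attempting the window recursion you sketch.
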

For a proof of this lemma, see Theorem 2 of 
\cite{integer2024}.

\begin{definition}\label{seconddefn}
For $k \in \mathbb{N}$ such that $k \geq 2$,
we define a function $h_{k}(x)$ for a non-negative integer $x$ as
\begin{equation}
h_{k}(x)= x + \left\lfloor \frac{x}{k-1} \right\rfloor +1. \nonumber  
\end{equation}
\end{definition}

\begin{lemma}\label{seconddefnb}
For the function $h_{k}(x)$ in Definition \ref{seconddefn}, the following hold:\\
$(i)$ $h_{k}(x) < h_{k}(x^{\prime})$ for any $x, x^{\prime} \in \mathbb{Z}_{\ge 0}$ such that $x < x^{\prime}$;\\
$(ii)$ $x < x+1 \leq h_{k}(x)$ for any $x \in \mathbb{Z}_{\ge 0}$;\\
$(iii)$  $\lim\limits_{p\to\infty}  h_{k}^p(x_0)= \infty$, where $h_{k}^p$ is the $p$-th functional power of $h_k$. 
\end{lemma}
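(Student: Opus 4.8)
The plan is to prove the three parts in the order $(i)$, $(ii)$, $(iii)$, since the third statement will follow almost immediately from the second once we iterate. The function in question is $h_k(x) = x + \left\lfloor \frac{x}{k-1}\right\rfloor + 1$, defined on non-negative integers with $k \geq 2$ fixed. The essential feature is that $h_k$ is built from three monotone pieces: the identity $x$, the floor term $\left\lfloor \frac{x}{k-1}\right\rfloor$, and the constant $1$.

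For part $(i)$, I would argue that $h_k$ is strictly increasing. Take $x < x'$ with $x, x' \in \mathbb{Z}_{\ge 0}$. Since both are integers, $x' \geq x+1$. The identity term strictly increases, $x < x'$, while the floor term is (weakly) monotone: because $t \mapsto \left\lfloor \frac{t}{k-1}\right\rfloor$ is nondecreasing, we have $\left\lfloor \frac{x}{k-1}\right\rfloor \leq \left\lfloor \frac{x'}{k-1}\right\rfloor$. Adding the common constant $1$ and combining a strict increase with a weak increase gives $h_k(x) < h_k(x')$. This part requires only the monotonicity of the floor function, which is routine.

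For part $(ii)$, the inequality $x+1 \leq h_k(x)$ amounts to showing $\left\lfloor \frac{x}{k-1}\right\rfloor \geq 0$, which holds for every $x \in \mathbb{Z}_{\ge 0}$ because $k-1 \geq 1$ and $x \geq 0$ force the quotient to be non-negative, hence its floor is non-negative. The left inequality $x < x+1$ is trivial. So part $(ii)$ reduces to the non-negativity of a single floor term.

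For part $(iii)$, the plan is to iterate the strict inequality from $(ii)$. Since $h_k(x) \geq x+1$ for all $x$, applying $h_k$ repeatedly yields $h_k^p(x_0) \geq x_0 + p$ by a short induction on $p$: the base case $p=0$ is trivial, and if $h_k^p(x_0) \geq x_0 + p$ then by monotonicity (part $(i)$) and part $(ii)$ we get $h_k^{p+1}(x_0) = h_k(h_k^p(x_0)) \geq h_k^p(x_0) + 1 \geq x_0 + p + 1$. Letting $p \to \infty$ forces $h_k^p(x_0) \to \infty$. The only mild subtlety, and the closest thing to an obstacle, is making sure the induction correctly invokes monotonicity: to conclude $h_k(h_k^p(x_0)) \geq h_k^p(x_0)+1$ one simply applies part $(ii)$ directly at the point $h_k^p(x_0)$, so monotonicity is not even strictly needed for the lower bound. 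Overall the entire lemma is elementary, resting on monotonicity of the floor and the uniform additive gain of at least $1$ per iteration.
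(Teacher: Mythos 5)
Your proof is correct and complete: part $(i)$ follows from the strict increase of the identity term plus the weak monotonicity of the floor term, part $(ii)$ from the non-negativity of $\left\lfloor \frac{x}{k-1}\right\rfloor$, and part $(iii)$ from iterating $(ii)$ to get $h_k^p(x_0) \geq x_0 + p$; your remark that monotonicity is not even needed in $(iii)$ is also accurate. Note that the paper itself gives no proof of this lemma --- it defers to Lemma 6 of the cited reference \cite{integer2024} --- so there is no internal argument to compare against, but your elementary argument is exactly the one that belongs here.
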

For a proof of this lemma, see Lemma 6 of \cite{integer2024}.

\begin{proposition}\label{theoremforgk}
For $h_{k}$ in Definition \ref{seconddefn}, the following hold: \\
$(i)$	$h_{k}(x)= \left\lfloor \frac{h_{k}(x)}{k} \right\rfloor + x+1$ for $x \in \mathbb{Z}_{\ge 0}$;\\
$(ii)$	$\mathcal{G}(h_{k}(x)) = \mathcal{G}(x)$ for $x \in \mathbb{Z}_{\ge 0}$;\\
$(iii)$ For any $m \in \mathbb{Z}_{\ge 0}$, there exists $x_{0} \in \mathbb{Z}_{\ge 0}$ such that 
\begin{equation}
\{x \in \mathbb{Z}_{\ge 0}: \mathcal{G}(x)=m \} = \{h_{k}^p(x_{0}):p \in \mathbb{Z}_{\ge 0}\}.\nonumber 
\end{equation}
\end{proposition}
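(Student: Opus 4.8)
The plan is to handle the three parts in order, since (i) feeds (ii) and (ii) feeds (iii).

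For (i) I would use the division algorithm. Writing $x = q(k-1)+r$ with $0 \le r \le k-2$, so that $\lfloor x/(k-1)\rfloor = q$, direct substitution gives $h_k(x) = qk + (r+1)$ with $1 \le r+1 \le k-1 < k$; hence $\lfloor h_k(x)/k\rfloor = q = \lfloor x/(k-1)\rfloor$. Rewriting $h_k(x) = x + \lfloor x/(k-1)\rfloor + 1$ as $h_k(x) = \lfloor h_k(x)/k\rfloor + x + 1$ is then immediate. This is the only computational step and presents no real difficulty.

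For (ii) the engine is the identity from part (i), which says $\lfloor h_k(x)/k\rfloor = h_k(x)-x-1$. Since $\textit{move}(h_k(x))$ consists of the $\lfloor h_k(x)/k\rfloor$ integers immediately below $h_k(x)$, this yields $\textit{move}(h_k(x)) = \{v : x < v < h_k(x)\}$, so $\mathcal{G}(h_k(x)) = \textit{mex}\{\mathcal{G}(v) : x < v < h_k(x)\}$. It then suffices to prove two facts. First, no such $v$ satisfies $\mathcal{G}(v)=\mathcal{G}(x)$: I would show $x \in \textit{move}(v)$ for every $v$ with $x < v < h_k(x)$, which reduces to the inequality $v-x \le \lfloor v/k\rfloor$. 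I would establish this by observing that $\lfloor v/k\rfloor - (v-x)$ is non-increasing in $v$ (raising $v$ by $1$ lowers $v-x$ by $1$ and raises $\lfloor v/k\rfloor$ by $0$ or $1$) and vanishes at the endpoint $v = h_k(x)-1$ by part (i); hence it is $\ge 0$ throughout the interval. Because $x$ is then reachable from $v$, the mex property forces $\mathcal{G}(v)\ne\mathcal{G}(x)$, so $\mathcal{G}(x)$ is absent and the mex is $\le \mathcal{G}(x)$. Note this no-repeat fact is unconditional and will be reused in (iii). Second, every $m' < \mathcal{G}(x)$ occurs as some $\mathcal{G}(v)$ in the interval: here I would use strong induction on $x$. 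By definition of mex, $m'$ is realized below $x$; letting $w_0$ be the largest position $\le x$ with $\mathcal{G}(w_0)=m'$ (necessarily $w_0<x$, as $m'\ne\mathcal{G}(x)$), the inductive hypothesis (ii) for $w_0$ gives $\mathcal{G}(h_k(w_0))=m'$, while strict monotonicity of $h_k$ (Lemma \ref{seconddefnb}(i)) together with the maximality of $w_0$ places $h_k(w_0)$ strictly inside $(x,h_k(x))$. Thus the mex is $\ge \mathcal{G}(x)$, and equality follows.

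For (iii) I would induct on $m$, assuming the statement for all smaller values. Each lower level set $\{x:\mathcal{G}(x)=m'\}$ with $m'<m$ is, by hypothesis, a single orbit of $h_k$, and such an orbit is sparse: since $h_k(x) > \frac{k}{k-1}x$ it grows at least geometrically, so it meets $[0,N]$ in $O(\log N)$ points and, by Lemma \ref{seconddefnb}(iii), is unbounded. Hence the union of these finitely many sets cannot exhaust $\mathbb{Z}_{\ge 0}$; let $x^{*}$ be the least integer outside it. Every position below $x^{*}$ has Grundy value $<m$, so $\textit{mex}$ forces $\mathcal{G}(x^{*})\le m$, while $x^{*}$ lying outside all lower level sets forces $\mathcal{G}(x^{*})\ge m$; thus $\mathcal{G}(x^{*})=m$ and $x^{*}=\min\{x:\mathcal{G}(x)=m\}$. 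Part (ii) places the whole orbit $\{h_k^{p}(x^{*}):p\in\mathbb{Z}_{\ge 0}\}$ inside this level set. Conversely, if some $z$ in the level set avoided the orbit, then by monotonicity and divergence (Lemma \ref{seconddefnb}) it would satisfy $w < z < h_k(w)$ for $w = h_k^{p}(x^{*})$ with $\mathcal{G}(w)=m=\mathcal{G}(z)$, contradicting the no-repeat fact from (ii). Hence the level set equals the orbit, which is the assertion of (iii).

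The hard part is not (i) or the mex bookkeeping but the two structural inputs that make (iii) genuine: the uniform reachability inequality $v-x\le\lfloor v/k\rfloor$ on $(x,h_k(x))$, which is exactly what prevents $\mathcal{G}(x)$ from recurring before $h_k(x)$, and the attainment of every Grundy value, ensuring each level set is nonempty so the orbit description is not vacuous. The latter is the only truly global point, since it cannot be verified locally and instead rests on the geometric growth and divergence of the lower orbits supplied by Lemma \ref{seconddefnb}(iii).
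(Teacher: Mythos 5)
The paper never proves Proposition \ref{theoremforgk} itself: it defers to Proposition 1 of \cite{integer2024}, so there is no internal proof to compare yours against, and I can only judge your argument on its merits. It is correct, and it is a genuine self-contained replacement for the citation. Part (i) via the division algorithm is right, and your computation $h_k(x) = qk + (r+1)$ with $1 \le r+1 \le k-1$ also yields the fact you lean on later, namely $k \nmid h_k(x)$, which is what makes $\left\lfloor (h_k(x)-1)/k \right\rfloor = \left\lfloor h_k(x)/k \right\rfloor$ and hence makes $\lfloor v/k \rfloor - (v-x)$ vanish at $v = h_k(x)-1$; I would state that divisibility fact explicitly, since the bare statement of (i) does not assert it --- it comes from the computation inside (i)'s proof. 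With that noted, (ii) is sound: (i) identifies $\textit{move}(h_k(x))$ with the gap $\{v : x < v < h_k(x)\}$, the non-increasing difference argument gives the uniform reachability $v - x \le \lfloor v/k \rfloor$ (so $x \in \textit{move}(v)$ and no $v$ in the gap can repeat $\mathcal{G}(x)$), and the strong induction on $x$ --- taking the largest $w_0 \le x$ with $\mathcal{G}(w_0) = m'$ and pushing it into the gap by $h_k$, using monotonicity from Lemma \ref{seconddefnb}(i) --- supplies every value $m' < \mathcal{G}(x)$, so the mex is exactly $\mathcal{G}(x)$. In (iii), your induction on $m$ handles the one point that is easy to gloss over: nonemptiness of the level set. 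Mere unboundedness of the lower orbits would not suffice (finitely many unbounded sets can still cover $\mathbb{Z}_{\ge 0}$); your sparsity count --- each orbit meets $[0,N]$ in $O(\log N)$ points because $h_k(x) > \frac{k}{k-1}x$, so $m$ orbits cannot cover $[0,N]$ for large $N$ --- is the step doing the real work, and it is correct. The rest (the minimal uncovered point $x^*$ has $\mathcal{G}(x^*) = m$, the orbit of $x^*$ lies in the level set by (ii), and nothing strictly between consecutive orbit points can share the value by the unconditional no-repeat fact) pins the level set to the orbit exactly as claimed.
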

For a proof of this lemma, see Proposition 1 of 
\cite{integer2024}.

\begin{theorem}\label{theofjosemaxcal}
Let $n$ be a natural number. Then, there exists $p  \in \mathbb{N}$ such that 
$h_{k}^{p-1}(0) < n(k-1) \leq h_{k}^{p}(0)$, and the last number that remains is $nk-h_{k}^{p}(0)$ in the Josephus problem of $n$ numbers, where every $k$-th number is removed.
\end{theorem}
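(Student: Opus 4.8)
The plan is to translate the Josephus question into a statement about Grundy numbers via Theorem~\ref{grundytheoremjj}, and then use Proposition~\ref{theoremforgk} to pin down exactly which zero-Grundy position falls into the relevant window. First I would establish the existence and uniqueness of $p$. Since $h_{k}^{0}(0)=0$ and, by Lemma~\ref{seconddefnb}$(i)$--$(iii)$, the sequence $\bigl(h_{k}^{p}(0)\bigr)_{p\ge 0}$ is strictly increasing and tends to infinity, and since $n(k-1)\ge 1>0=h_{k}^{0}(0)$ because $n\ge 1$ and $k\ge 2$, there is a unique $p\in\mathbb{N}$ with $h_{k}^{p-1}(0)<n(k-1)\le h_{k}^{p}(0)$. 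Next I would identify the surviving number. If $m^{\ast}$ is the number that remains, then $o_{n}(m^{\ast})=n$ by Definition~\ref{defofjj}, so Theorem~\ref{grundytheoremjj} gives $\mathcal{G}(nk-m^{\ast})=n-o_{n}(m^{\ast})=0$. Moreover $1\le m^{\ast}\le n$ forces $nk-m^{\ast}$ into the window $n(k-1)\le nk-m^{\ast}\le nk-1$. Thus the whole theorem reduces to showing that $h_{k}^{p}(0)$ is the one and only integer in $[\,n(k-1),\,nk-1\,]$ whose Grundy number is $0$.

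That $h_{k}^{p}(0)$ has Grundy number $0$ is immediate: iterating Proposition~\ref{theoremforgk}$(ii)$ gives $\mathcal{G}(h_{k}^{p}(0))=\mathcal{G}(0)$, and $\mathcal{G}(0)=0$ since $move(0)=\emptyset$. To see that $h_{k}^{p}(0)$ actually lies in the window, the lower bound $n(k-1)\le h_{k}^{p}(0)$ is the defining inequality for $p$. For the upper bound, I set $y=h_{k}^{p-1}(0)$; from $y<n(k-1)$ I get $\lfloor y/(k-1)\rfloor\le n-1$, so
\[
h_{k}^{p}(0)=y+\left\lfloor\tfrac{y}{k-1}\right\rfloor+1\le \bigl(n(k-1)-1\bigr)+(n-1)+1 = nk-1,
\]
as required.

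The main obstacle is uniqueness, and here I would use that, by Proposition~\ref{theoremforgk}$(iii)$ applied with $m=0$, the set of all zero-Grundy positions equals the orbit $\{h_{k}^{q}(0):q\in\mathbb{Z}_{\ge 0}\}$ (the base point must be $0$, since $0$ is the smallest element of a strictly increasing orbit and $\mathcal{G}(0)=0$). It then suffices to show no other orbit element lands in the window. The predecessor $h_{k}^{p-1}(0)$ sits strictly below $n(k-1)$ by the choice of $p$; for the successor, monotonicity of $h_{k}$ together with $h_{k}^{p}(0)\ge n(k-1)$ yields $h_{k}^{p+1}(0)\ge h_{k}(n(k-1))=n(k-1)+n+1=nk+1>nk-1$, so the successor sits strictly above the window. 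Hence $h_{k}^{p}(0)$ is the unique zero-Grundy integer in $[\,n(k-1),\,nk-1\,]$, forcing $nk-m^{\ast}=h_{k}^{p}(0)$ and therefore $m^{\ast}=nk-h_{k}^{p}(0)$. The only subtlety to watch throughout is keeping the floor estimates tight enough that both window endpoints are respected simultaneously.
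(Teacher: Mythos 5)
Your proof is correct and follows essentially the same route as the paper's: both translate the problem via Theorem~\ref{grundytheoremjj} into $\mathcal{G}(nk-m)=0$, identify the zero-Grundy positions with the orbit $\{h_k^q(0)\}$ via Proposition~\ref{theoremforgk}$(iii)$, and trap $nk-m$ using monotonicity together with the key computation $h_k(n(k-1))=nk+1$. The only differences are cosmetic: you additionally verify $h_k^p(0)\le nk-1$ directly (a step the paper's argument does not need, since trapping $nk-m$ strictly between $h_k^{p-1}(0)$ and $h_k^{p+1}(0)$ already suffices), and you make explicit the identification of the orbit's base point as $0$, which the paper leaves implicit.
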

\begin{proof}
Let $m$ be the number that remains in the Josephus problem of $n$ numbers, where the $k$-th number is removed. Then, based on Definition \ref{defofjj} and Theorem \ref{grundytheoremjj}, 
\begin{equation}
\mathcal{G}(nk-m)=0. \label{mathcalgnk0} 
\end{equation}
Since $h^0_k(0)=0$, by $(ii)$ and $(iii)$ of Lemma \ref{seconddefnb}, there exists $p \in \mathbb{N}$ such that 
\begin{equation}
h_{k}^{p-1}(0) < n(k-1) \leq h_{k}^{p}(0).\label{hkp1nk1hkp}
\end{equation}
Since $\mathcal{G}(0)=0$, by Proposition \ref{theoremforgk}, 
\begin{equation}
\mathcal{G}(h_{k}^{p-1}(0))=\mathcal{G}(h_{k}^{p}(0))=\mathcal{G}(h_{k}^{p+1}(0))=\mathcal{G}(0)=0.\nonumber
\end{equation}
As
\begin{equation}
h_k(n(k-1))=n(k-1)+n+1 = nk+1,\nonumber
\end{equation}
by Inequality (\ref{hkp1nk1hkp}) and $(i)$ of Lemma \ref{seconddefnb},
\begin{align}
h_k^{p+1}(0) & = h_k(h_k^{p}(0)) \nonumber \\
& \geq h_k(n(k-1))  \nonumber \\
& =  nk+1. \nonumber
\end{align}
Hence, by Inequality (\ref{hkp1nk1hkp}), 
\begin{equation}
h_k^{p-1}(0) < n(k-1) \leq nk-m < nk+1 \leq h_k^{p+1}(0).\label{nk1nkmhkp10}
\end{equation}
By $(ii)$ of Lemma \ref{seconddefnb},
\begin{equation}
h_{k}^{p-1}(0) <  h_{k}^{p}(0) < h_{k}^{p+1}(0),\nonumber
\end{equation}
and hence by Equation (\ref{mathcalgnk0}), Inequality (\ref{nk1nkmhkp10}) and (iii) of Proposition \ref{theoremforgk}, 
$nk-m = h_{k}^{p}(0)$. Hence, $m = nk-h_{k}^{p}(0)$.
\end{proof}


\section{A New 0($k\log x)$ Algorithm for Josephus Problem}\label{algorithm}
We use the following formula in this section. This formula is presented in Theorem \ref{theofjosemaxcal} of the previous section.

Let $n,k$ be a natural number such that $k \geq 2$ and 
\begin{equation}
h_{k}(x)= x + \left\lfloor \frac{x}{k-1} \right\rfloor +1.\label{definitionoffunctionhk}
\end{equation}
If 
\begin{equation}
h_{k}^{p-1}(0) < n(k-1) \leq h_{k}^{p}(0)  \label{betweentwoterm}  
\end{equation}
for some $p \in \mathbb{N}$, the last number that remains is 
\begin{equation}
nk-h_{k}^{p}(0)\label{thelastnumber}
\end{equation}
in the Josephus problem of $n$ numbers, where every $k$-th number is removed.

Based on (\ref{definitionoffunctionhk}), (\ref{betweentwoterm}) and (\ref{thelastnumber}),
we build an algorithm for the Josephus problem.\\\\
$(i)$ We start with $x=0$. \\
$(ii)$ We substitute $x$ with $x +  \left\lfloor \frac{x}{k-1} \right\rfloor +1$ until we get $x$ such 
$nk-n \leq x$. This process is based on (\ref{definitionoffunctionhk}) and (\ref{betweentwoterm}).\\
$(iii)$ Then, we get $m=nk-x$, and $m$ is the number that remains.

The time complexity of our algorithm is $O(k \log n)$, and this time complexity is on a par with a $O(k \log n)$ algorithm in \cite{josephusalgorithm}.  We do not have any recursion overhead or stack overflow because we do not use any recursion. Therefore, the space complexity of our algorithm is $O(1)$, and ours is better than the existing $O(k \log n)$ algorithm in this respect.

The structure of our algorithm is more simple than 
the $O(k \log n)$ algorithm in \cite{josephusalgorithm}.

The following is the Python implementation of our algorithm. We have $n$ numbers 
$0,1,2,\dots, n-1$ and remove every $k$th number. We get the number that remains after the calculation.
\begin{verbatim}
def proposed_algorithm(n, k):
  x = 0   # (1)
  range_min = (k - 1) * n 
  while x < range_min:   # (2)
    x += x // (k - 1) + 1   # (3)
  m = n * k - x   # (4)
  return  m - 1   # (5)
\end{verbatim}

\section{Comparison between Our Algorithm and Existing Algorithms}\label{sectioforalgo}
Here, we compare our algorithm to various existing algorithms. 

We present two existing algorithms and compare our algorithm to these.
\subsection{An Existing $O(n)$ Algorithm}\label{O(n)}
We have $n$ numbers 
$0,1,2,\dots, n-1$ and remove every $k$th number. We get the number that remains after the calculation.
In the following Python implementation, the algorithm is the same as the definition of the Josephus problem itself.
Here, we do not use recursion to avoid stack overflow.
\begin{verbatim}
def benchmark_b(n, k):
    if n == 1:
        return 0
    result = 0
    for i in range(2, n + 1):
        result = (result + k) % i
    return result
\end{verbatim}

$result = (result + k) \% i $
Since this algorithm does not use recursion, the space complexity is $0(1)$.Since time complexity is $0(n)$, the running time will increase rapidly as $n$ increases.

\subsection{An Existing $O(k \log n)$ Algorithm }\label{O(klogn)}
The following algorithm is based on the method of removing $k$-th, $2k$-th, ..., $\left\lfloor \frac{n}{k} \right\rfloor k$-
the number as one step. This algorithm used $cnt = n // k$ to remove these numbers as one step.

\begin{verbatim}
	def benchmark_b(n, k):
  if n== 1:
    return 0
  if k == 1:
    return n - 1
  if n < k:
    return (benchmark_b(n - 1, k) + k) % n
  
  cnt = n // k
  result = benchmark_b(n - cnt, k)
  result -= n % k
  if result < 0:
    result += n
  else:
    result += result // (k - 1)

  return result
\end{verbatim}
The time complexity of this algorithm is $O(k \log n)$, however, the space complexity is $O(n)$, and we may have 
the stack overflow because of the use of recursion.
\subsection{Comparison of running time of each algorithm}
We compare our algorithm to two existing algorithms by program execution time.
We use a MacBook Pro(CPU:M1 Max, RAM:32GB), and we run each program $3000$ times.

In the following statement, we denote the execution time of the algorithm in Subsection \ref{O(n)} by Benchmark A,
and the execution time of the algorithm in Subsection \ref{O(klogn)} by Benchmark B.

In Graph \ref{graph_fixed_k}, we let $k=200$ and we increase $n$ from $1$ to $5000$ ($step=100$).
When $n$ is small, Benchmark A,b are small, but Benchmark A increases rapidly because of its time complexity $O(n)$ as $n$ increases.
When $n$ is large,  the execution time of our algorithm and Benchmark B are smaller than Benchmark A. However, if we compare the execution time of our algorithm to Benchmark B, ours is better.
The reason is that ours has a smaller space complexity, although both have the same time complexity $O(k\log n)$.
Therefore, when $n$ is large, our algorithm is better than these two algorithm.

\begin{figure}[H]
\begin{center}
\includegraphics[height=4cm]{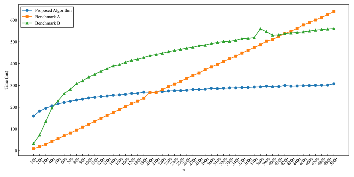}
\caption{ $k=200$}
\label{graph_fixed_k}
\end{center}
\end{figure}

In Graph \ref{graph_fixed_n}, we let $n=300$, and increase $k$ from $10$ to $300$ ($step=10$).
Benchmark A is constant regardless of the size of $k$. The execution time of our algorithm and Benchmark B
increase as $k$ increases, however, Benchmark B does not increase endlessly because it does not depend on $k$ when $n<k$.

The efficiency of our algorithm, the algorithm in Subsection \ref{O(n)}, and the algorithm in Subsection \ref{O(klogn)} depend on $k$ and $n$.

When $k$ is small and $n$ is large, our algorithm and the algorithm in Subsection \ref{O(klogn)} are better than the algorithm in Subsection \ref{O(n)}. We can explain this fact by the time complexity $0(k \log n)$, but our algorithm is better than the algorithm in Subsection \ref{O(klogn)}.
When $k$ is large and $n$ is not large, the algorithm in Subsection \ref{O(n)} is better than others.

\begin{figure}[H]
\begin{center}
\includegraphics[height=4cm]{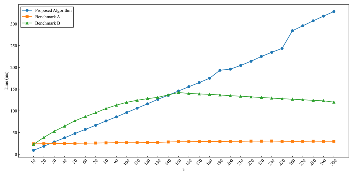}
\caption{$n=300$}
\label{graph_fixed_n}
\end{center}
\end{figure}




\end{document}